\documentclass[12t,conference,onecolumn]{IEEEtran}  %draftcls,onecolumn
\usepackage{times}
\usepackage{hyperref} % for links
\usepackage{amsmath}
\usepackage{amssymb}
\usepackage{amsthm}

\newtheorem{theorem}{Theorem}
\newtheorem{lemma}{Lemma}

\theoremstyle{definition}
\newtheorem{definition}{Definition}
\theoremstyle{remark}
\newtheorem{remark}{Remark}

\newcommand{\E}[1]{\mathbb{E} \left( #1 \right)}
\newcommand{\BRA}[1]{\left( #1 \right)}
\newcommand{\BRAs}[1]{\left\{ #1 \right \}}
\newcommand{\PR}[1]{Pr\left\{ #1 \right\}}
\newcommand{\bW}{{\textbf W}}
\newcommand{\bS}{{\textbf S}}
\newcommand{\bX}{{\textbf X}}
\newcommand{\bY}{{\textbf Y}}
\newcommand{\cA}{{\mathcal A}}
\newcommand{\cX}{{\mathcal X}}
\newcommand{\cY}{{\mathcal Y}}
\newcommand{\cS}{{\mathcal S}}
\newcommand{\chS}{{\hat{\mathcal S}}}

\newcommand{\ED}{\mathbb{E}(D)}

\newcommand{\Selmnt}{\textbf{s}_n}
\newcommand{\Selmnto}{\textbf{s}_n^1}
\newcommand{\Selmntt}{\textbf{s}_n^2}
\newcommand{\Shelmnt}{\hat{\textbf{s}}_n}
\newcommand{\calX}{{$\mathcal X$}}

\newcommand{\ie}{{\emph{i.e.}}}
\newcommand{\eg}{{\emph{e.g.}}}

\newcommand{\verdu}{ Verd{\'u} }

\newcommand{\limtoinf}{\lim_{n \to \infty }}
\newcommand{\liminftoinf}{\liminf_{n \to \infty }}
\newcommand{\limsuptoinf}{\limsup_{n \to \infty }}
\newcommand{\toinf} {\underset{n \rightarrow \infty}{\rightarrow}}

% these prevent latex from breaking in line formulas
\relpenalty=10000
\binoppenalty=10000

\begin{document}

\title{Information Spectrum Approach to the Source Channel Separation Theorem}
\author{
Nir~Elkayam,~\IEEEmembership{Student Member,~IEEE,}
Meir~Feder,~\IEEEmembership{Fellow,~IEEE}
}
\maketitle

\subsection*{\centering Abstract}
%\textit, \textbf
\textit{A source-channel separation theorem for a general channel has recently been shown by Aggrawal et. al \cite{agarwal2013universal}.
This theorem states that
%In this paper we provide simple proofs of the statements given in \cite{agarwal2013universal} which essentially say, that
if there exist a coding scheme that achieves a maximum distortion level $d_{max}$ over a general channel \bW, then reliable communication can be accomplished over this channel at rates less then $R(d_{max})$, where $R(\cdot)$ is the rate distortion function of the source. The source, however, is essentially constrained to be discrete and memoryless (DMS).
In this work we prove a stronger claim where the source is general, satisfying only a ``sphere packing optimality'' feature, and the channel is completely general.
Furthermore, we show that if the channel satisfies the strong converse property as define by Han \& \verdu\ \cite{DBLP:journals/tit/VerduH94}, then the same statement can be made with $d_{avg}$, the average distortion level, replacing $d_{max}$. Unlike the proofs in \cite{agarwal2013universal}, we use information spectrum methods to prove the statements and the results can be quite easily extended to other situations. }

\section{Introduction}\label{sec:intro}

The source channel separation theorem, first proved by Shannon \cite{shannon1959coding} for the transmission of discrete memoryless source (DMS) over discrete memoryless channel (DMC) states that the \emph{separation strategy} is optimal. This means optimal performance can be attained by first compressing the source output to the desired distortion level and then reliably communicating the compressed bits over the channel. The separation theorem was later extended to indecomposable channels \cite{gallager1968information}.
%\textbf{NEW: }
The almost lossless case (transmission codes) was handled at \cite{vembu1995source} and a general condition is given for the separation theorem to hold.

%The source channel separation theorem, first proved by Shannon \cite{shannon1959coding} for the transmission of discrete memoryless source (DMS) over discrete memoryless channel (DMC) states that an optimal strategy for transmission of information in a lossy manner over a noisy channel can be \emph{separation}. First, compress the source output to the desired distortion level and then reliable communication of the compressed bits over the channel. This was extended to indecomposable channels \cite{gallager1968information}.

Joint Source-Channel Coding (JSCC) refers to the case where such a separation is not used. In some cases, e.g. binary sources over BSC or Gaussian source over AWGN channel, separation can attain the optimal performance yet a simpler JSCC strategy can be used, \ie, uncoded transmission \cite{gastpar2003code}. In some cases separation is suboptimal. A simple example would be the symmetric binary source and a compound memoryless BSC where the flipping probability is drawn ahead of the block and stays fixed for the whole block. In this case uncoded transmission is optimal and it is strictly suboptimal to use separation.

Some cation is needed here, because there are two senses of optimality when a distortion measure is given - the \textbf{maximal distortion} level and \textbf{average distortion} level. The separation relative to the maximal distortion level is easier to accomplish, as this allows us to increase the distortion as long as we do not exceed the desired distortion level. So even if the average distortion in the original scheme is much less then the maximum distortion, the separation strategy yield the desired maximum distortion but the average distortion might be increased.

When average distortion level is used, we should follow the specific distortion, which can be large or small as the channel condition are good or bad. This is much like the variable rate channel capacity \cite{DBLP:journals/tit/VerduS10} which tries to capture the whole spectrum of channel conditions - when the channel provides good conditions %(without feedback),
lower distortion level can be achieved or more bits can be reliable transmitted over the channel. A separation strategy in that case will use successive refinement of the source \cite{equitz1991successive}, \cite{rimoldi1994successive} and then transmission of the bits over a channel with variable rate channel capacity so that the better the channel, the lower the distortion \cite{tian2008successive} can be made. In many cases this separation strategy is suboptimal.

In transmitting a DMS over a DMC the cases of average distortion level and maximal distortion level coincide \cite{YuvalKochmanConverse}. However, for other channels this is not always the case even for DMS's. We will see that for indecomposable channels, this is always true.

Information spectrum methods \cite{DBLP:journals/tit/VerduH94}, \cite{koga2002information} provide a very simple formalization and intuition into channel capacity in almost every communication situation including unicast, multiple access, broadcast and other situations \cite{somekh2006general}. The problem is that the expressions given with these methods are usually not useful when we want to compute the channel capacity. Nevertheless, we use information spectrum methods in this work and attain the desired results.

In this paper we deal with sources for which the sphere packing bound is tight. This means that the sphere of radius $d$ around any reconstruction point contains at most $2^{-nR(d)}$ of the typical space. For these source we first prove a generalization of the Han-Verd{\'u} converse lemma \cite{DBLP:journals/tit/VerduH94} that connects the \textit{rate distortion spectrum} (that is, the probability that the random variable $D$, the instantaneous distortion level exceed some level $d$) to the information spectrum of the channel. Then we use this to prove a very general separation theorem for the case where maximum distortion level criteria is given. For channels that satisfy the strong converse, which includes DMC and ergodic channels, we prove a separation theorem for the average distortion level as well. Actually, for these kind of channels we prove that the average and maximal distortion level coincide.

\section{Preliminaries}
\subsection{Notation}

This paper uses lower case letters (e.g. $x$) to denote a particular value of the corresponding random variable denoted in capital letters (e.g. $X$). Calligraphic fonts (e.g. \calX) represent a set.

The random variable $D$ represents the instantaneous distortion level. $\ED=d_{avg}$ is the average distortion level, and $d_{max}$ will be used to denote the maximum distortion level.

We will use the $o(\cdot)$ notation to denote terms that goes to $0$ w.r.t the argument. Mainly, $o(n)$ will denote a sequence $\epsilon_n$ such that $ \limtoinf \epsilon_n/n=0 $ and $o(1)$ denote a sequence that converges to 0. Throughout this paper $\log$  will be defined to the base 2 unless otherwise indicated. $\PR{A}$ will denote the probability of the event $A$.

\subsection{Definition}

\begin{definition}[\cite{koga2002information} Source and Reproduction alphabet, Distortion measure, Rate distortion function]
A general source \bS\ is defined as an infinite sequence of random variable on $\cS_n$. The reproduction alphabet is defined over the set $\chS_n$. A distortion measure is a function $ d_n: \cS_n \times \chS_n \rightarrow \mathbb{R_{+}}$.
%where $\chS_n$ the reproduction alphabet.
There are several rate distortion functions that can be defined according to the different performance requirements (see \cite{koga2002information} ch.5.3).
\begin{itemize}
  \item $R_{fm}(d)$ - fixed length code, maximum distortion criterion.
  \item $R_{fa}(d)$ - fixed length code, average distortion criterion.
  \item $R_{vm}(d)$ - variable length code, maximum distortion criterion.
  \item $R_{va}(d)$ - variable length code, average distortion criterion.
\end{itemize}
\end{definition}

\begin{definition}[Sphere packing optimal Source (SPO)]
A general source \bS\ is said to be \textbf{Sphere packing optimal} if there exist a subsets $\cA_n \subset \cS_n$ and $k_n$, such that:
\begin{align} \label{def:spo_source}
&\limtoinf \PR{\Selmnt \notin \cA_n} = 0 \\
&\PR{\Selmnt \in \cA_n, d(\Selmnt, \Shelmnt) \leq d} \leq 2^{-n\BRA{R(d)+k_n}} \\
&\limtoinf k_n = 0
\end{align}
for each $d \geq 0$ and $ \Shelmnt \in \chS_n $.
\end{definition}

\begin{remark} In the appendix we demonstrate that DMS's are SPO. The set $\cA_n$ will be the strong typical sequences relative to the source distribution. The Gaussian source with mean square distortion is also SPO as can be shown by a straight forward calculation, given in the appendix as well.
\end{remark}

\begin{remark} For SPO sources that have a reference word (see \cite{koga2002information} Theorem 5.3.1), the following different notions of rate functions are equal $$R_{fm}(d) = R_{fa}(d) = R_{vm}(d).$$ The common value will be denoted $R(d)$ without the subscript. This is also shown in the appendix.
\end{remark}

\begin{definition}[General Channel] A general channel is a sequence of transition matrices $ \bW = \BRAs{ W_n : \cX_n \rightarrow \cY_n } $ where $ W^n(y|x) $ denotes the conditional probability of $y$ given $x$. Throughout this paper we will assume that the channel has finite input and output space. \footnote{ Extension of the results to abstract input and output spaces $\BRA{\cX_n, \cY_n}$ require subtle handling, see \cite{koga2002information}, but is possible to whenever situation which information spectrum can be used. }
\end{definition}

\begin{remark}
The source and channel input and output will be written as $\cS_n, \cX_n, \cY_n, \chS_n$ to indicate that it is usually not the $n^{th}$ order cartesian product. As an example let $\cS_n=\cS^{nR}, \chS_n=\chS^{nR}$ and $\cX_n=\cX^n, \cY_n=\cY^n$. This allows us to simplify the notation and to avoid unnecessary "rate" indication. We will use $x$ and $y$ to denote the input and output. Occasionally, we will omit the subscript $n$ and assume that it is understood from the context.
\end{remark}
\begin{remark}
Throughout the paper we assume that the sets $\BRA{\cS_n, \cX_n, \cY_n, \chS_n}$ are discrete. However, the results are valid for any abstract spaces for which information spectrum method can be applied see \cite{koga2002information}.
\end{remark}

\begin{definition}[JSCC Scheme] A general JSCC Scheme of the source $\bS$ over the channel $\bW$, includes:
\begin{itemize}
  \item Encoding function: $E_n: \cS_n \rightarrow \cX_n$
  \item Decoding function: $D_n: \cY_n \rightarrow \chS_n $.
\end{itemize}
There are several random variables defined:
\begin{itemize}
  \item $\bX_n = E_n(\bS_n)$ - The random variable of the channel input.
  \item $\bY_n$ - The output from the channel $\bW$ resulted from the input $\bX_n$.
  \item $D = d(\bS_n, D_n(\bY_n))$ - The instantaneous distortion.
\end{itemize}

%The R.V. $D$ will be the resulted distortion from the concatenation of the source, encoding, channel, and decoding:  $ \cS_n \overset{E_n}{\rightarrow} \cX_n \overset{W_n}{\rightarrow} \cY_n \overset{D_n}{\rightarrow} \chS_n $, that is, $D=d(\cS_n, \chS_n)$.
Sometimes we will write $x(\Selmnt)$ for $E_n(\Selmnt)$. The maximum distortion level $d_{max}$ is the infimum of the set of numbers $\alpha$, such that $\PR{D \geq \alpha} \toinf 0$. The average distortion level is $\ED=d_{avg}$.
\end{definition}

\subsection{Information Spectrum Notation}

\begin{definition}[Liminf in Probability]
If $\cA_n$ is a sequence of random variables, its \textbf{liminf in probability} is the supremum of all the reals $\alpha$ for
which $\PR{\cA_n \geq \alpha} \toinf 0$.
\end{definition}

We will use $i(a;b)$ to denote the information density between two outcomes from correlated random variables $A$ and $B$. Specifically, $i(a;b) = \frac{1}{n}\log\BRA{\frac{p(a|b)}{p(a)}} $. We will omit the $n$ and the indication to which random variable produce the outcomes as it will be understood from the context.

\begin{definition}[information density]
Given random variables $\bX_n, \bY_n$ with joint distribution $p(x,y)$ on $\cX_n \times \cY_n$, the \textbf{information density} is the function defined on $\cX_n \times \cY_n$:
$$ i_{\bX_n, \bY_n}(x;y) = \frac{1}{n}\log\BRA{\frac{p_{\bY_n|\bX_n}(y|x)}{p_{\bY_n}(y)}} $$
\end{definition}

\begin{remark} we will write $i\BRA{x,y}$ instead of $i_{\bX_n, \bY_n}(x;y)$ to avoid cumbersome notation, as it will be understood which random variables is in use.
\end{remark}

\begin{definition}[Inf-Information Rate]
The liminf in probability of the sequence of random variables $\frac{1}{n}i\BRA{\bX_n, \bY_n}$ will be referred to
as the the inf-information rate of the pair $ \bX_n, \bY_n $ and will be denoted as $ \underline{I}\BRA{\bX_n; \bY_n}$.
\end{definition}

In \cite{DBLP:journals/tit/VerduH94} it is shown that the capacity of a general channel is given by:
$$ C = \sup_{\bX_n}\underline{I}\BRA{\bX_n; \bY_n} $$ where the supremum is over the prior distribution $\bX_n$.

\begin{definition}[Strong converse]
The epsilon-capacity $C_{\epsilon}$ is the supremum of all rates $R$, for which there exist a coding scheme with rate $R$ and error probability less then $ \epsilon$. A channel is said to satisfy the strong converse property, if $C = \lim_{\epsilon \to 0 } C_{\epsilon}$. This means that is there exist a coding scheme of rate $R$ with error less then 1, then there exist a coding scheme of rate $R$ with error probability approach 0.
\end{definition}

%\subsection{Technical lemma - where to put it?}
Throughout the sequel we will need to use the following lemma which say that if there exist a probability mass to the right of the mean of a non-negative random variable then there must be a probability mass left to it to.
\begin{lemma} \label{lem:pos_D_mean}
Let $D$ be a non-negative random variable with $\mu = \ED < \infty$. If there exist $d_1>0$ and $\epsilon_1>0$ such that: $\PR{D > \mu+d_1} > \epsilon_1$, then there exist $d_2>0$ and $\epsilon_2>0$ such that: $\PR{D < \mu-d_2} > \epsilon_2$.
\end{lemma}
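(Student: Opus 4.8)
The plan is to prove the contrapositive-flavored statement directly by a simple mass-and-mean argument. The intuition is that a nonnegative random variable whose mean is $\mu$ cannot have an appreciable probability mass strictly to the right of $\mu$ without compensating mass strictly to the left of $\mu$; otherwise the mean would be forced to exceed $\mu$.

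**First I would** set up the two regions. By hypothesis we have $\PR{D > \mu+d_1} > \epsilon_1$ for some $d_1,\epsilon_1>0$. The idea is to split the expectation according to whether $D$ lies below some threshold $\mu - d_2$ or at/above it, and to bound the contribution of each piece. Concretely, I would write
\begin{align}
\mu = \ED &= \E{D \cdot \mathbf{1}_{\{D < \mu - d_2\}}} + \E{D \cdot \mathbf{1}_{\{D \geq \mu - d_2\}}},
\end{align}
where $d_2>0$ is a parameter to be chosen. For the second term I would use the crude bound obtained by noting that on $\{D > \mu+d_1\}$ (which carries probability at least $\epsilon_1$) the integrand is large, while on the rest of the region $\{D \geq \mu-d_2\}$ the integrand is at least $\mu-d_2$. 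Separating these contributions gives a lower bound of the form $(\mu-d_2)\bigl(1-\PR{D<\mu-d_2}\bigr) + (d_1+d_2)\epsilon_1$, since the mass above $\mu+d_1$ contributes an \emph{excess} of at least $(\mu+d_1)-(\mu-d_2)=d_1+d_2$ per unit probability relative to the baseline $\mu-d_2$.

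**Next I would** assume for contradiction that $\PR{D<\mu-d_2}$ can be made arbitrarily small for \emph{every} choice of $d_2>0$, i.e. that no such $(d_2,\epsilon_2)$ exists. Plugging a vanishing left-mass into the lower bound yields, after dropping the nonnegative first term and using nonnegativity of $D$,
\begin{align}
\mu \geq (\mu - d_2) + (d_1+d_2)\epsilon_1 - (\mu-d_2)\PR{D<\mu-d_2}.
\end{align}
Choosing $d_2$ small and the left-mass correspondingly negligible, the right-hand side approaches $\mu + d_1\epsilon_1 > \mu$, which is the desired contradiction. Hence some threshold $\mu-d_2>0$ must retain a fixed positive mass $\epsilon_2$, which is exactly the conclusion. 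I would finish by choosing explicit constants, e.g. fixing $d_2 = d_1\epsilon_1/2$ (small enough that $\mu-d_2>0$, which holds because the right-mass forces $\mu>0$) and solving the inequality for a concrete $\epsilon_2>0$.

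**The main obstacle** I anticipate is purely bookkeeping rather than conceptual: one must choose $d_2$ and $\epsilon_2$ in a consistent order so that the threshold $\mu-d_2$ stays positive and the final inequality is strictly violated. A subtle point worth stating is that the hypothesis $\PR{D>\mu+d_1}>\epsilon_1$ together with nonnegativity of $D$ already guarantees $\mu>0$, so the region $\{D<\mu-d_2\}$ is a genuine subset of $[0,\mu)$ for small $d_2$; this is what makes the claimed $\epsilon_2>0$ meaningful. The cleanest write-up is likely the direct contradiction above, but an equivalent and perhaps tidier route is to bound $\mu$ from below using only the two disjoint events $\{D<\mu-d_2\}^c$ and $\{D>\mu+d_1\}$ and then read off the required $\epsilon_2$ as an explicit function of $d_1,\epsilon_1,\mu$.
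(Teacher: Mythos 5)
Your proof is correct and is essentially the paper's own argument: both proofs quantify the excess contribution of at least $d_1\epsilon_1$ that the mass above $\mu+d_1$ makes to the mean, and then derive a contradiction with $\E{D}=\mu$ by showing that, absent mass below $\mu-d_2$, the mean would be at least $(\mu-d_2)(1-\epsilon_2)$ plus that excess, which exceeds $\mu$ for small $d_2,\epsilon_2$. The only difference is bookkeeping: the paper runs the computation through the tail-integral formula $\mu=\int_0^\infty \PR{D>x}\,dx$ and splits the integral at $\mu-d_2$, $\mu$, and $\mu+d_1$, whereas you split the expectation directly with indicator functions over the corresponding events; the resulting inequalities (and the choice of constants such as $d_2$ with $(\mu-d_2)(1-\epsilon_2)>\mu-d_1\epsilon_1$) are the same.
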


\begin{proof}
The proof is given in the appendix \ref{App:AppendixB}.
\end{proof}

\section{Joint Source Channel lemma (Unicast)}\label{sec:lemmas}
In this section we state and prove a generalization of the Han -\verdu converse lemma \cite{DBLP:journals/tit/VerduH94} which relates the error rate probability to the information spectrum. Here we can connect the instantaneous distortion level $D$ with the information spectrum of the channel.
%\textbf{NEW: } 
A similar, but different converse result was given in \cite[Theorem 1]{DBLP:journals/tit/KostinaV13} for a general source. We, however, provide the following lemma that holds for an SPO source which is enough for our purposes.
Let:
$$ L_{R(d)}^{\gamma} = \BRAs{ \BRA{\Selmnt,y} : i \BRA{x(\Selmnt);y} \leq R(d)-\gamma} $$
This is exactly the set which is used in the definition of channel capacity in terms of information spectrum.

\begin{lemma}[JSCC Converse lemma]\label{lem:JSC_Converse}
If the source \bS\ is SPO, then:
$$ \PR{D > d} \geq \PR{L_{R(d)}^{\gamma}} -2^{-n \BRA{\gamma+k_n}} - \PR{\Selmnt \notin \cA_n} $$
\end{lemma}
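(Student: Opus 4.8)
The plan is to treat the event $\BRAs{D > d}$ as the analogue of ``decoding error'' in the classical Han--\verdu converse, and to lower bound its probability by subtracting from $\PR{L_{R(d)}^{\gamma}}$ the probability of the ``deceptive'' event in which the information density is small and yet the distortion happens to fall within $d$. Writing $L = L_{R(d)}^{\gamma}$ for brevity, I would begin from the trivial split $\PR{L} = \PR{L, D > d} + \PR{L, D \leq d}$, which gives at once $\PR{D > d} \geq \PR{L, D > d} = \PR{L} - \PR{L, D \leq d}$. The entire problem then reduces to the upper bound $\PR{L, D \leq d} \leq 2^{-n\BRA{\gamma + k_n}} + \PR{\Selmnt \notin \cA_n}$.

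First I would peel off the atypical source words by $\PR{L, D \leq d} \leq \PR{L, D \leq d, \Selmnt \in \cA_n} + \PR{\Selmnt \notin \cA_n}$, so it suffices to prove $\PR{L, D \leq d, \Selmnt \in \cA_n} \leq 2^{-n\BRA{\gamma + k_n}}$. For this I would use the change of measure encoded in the definition of $L$. Because the channel input $x(\Selmnt)$ is a deterministic image of the source word, $p(\Selmnt, y) = p(\Selmnt)\, W_n\BRA{y \mid x(\Selmnt)}$, and the defining condition $i\BRA{x(\Selmnt); y} \leq R(d) - \gamma$ of $L$ is exactly $W_n\BRA{y \mid x(\Selmnt)} \leq p(y)\, 2^{n\BRA{R(d) - \gamma}}$. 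Substituting this bound on the set $L$ and then discarding the constraint $L$ to enlarge the sum yields
$$ \PR{L, D \leq d, \Selmnt \in \cA_n} \leq 2^{n\BRA{R(d)-\gamma}} \sum_{y} p(y) \sum_{\Selmnt \,:\, d\BRA{\Selmnt, D_n(y)} \leq d,\; \Selmnt \in \cA_n} p(\Selmnt). $$

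The heart of the argument, and the step I expect to be the main obstacle, is the inner sum. For each fixed channel output $y$ the decoder produces a single reconstruction point $\Shelmnt = D_n(y) \in \chS_n$, so that inner sum is precisely $\PR{\Selmnt \in \cA_n,\; d\BRA{\Selmnt, \Shelmnt} \leq d}$, the typical probability mass inside the distortion-$d$ ball centered at that one point. This is exactly the quantity the sphere-packing hypothesis controls: the SPO property bounds it by $2^{-n\BRA{R(d) + k_n}}$. It is essential that this bound holds for \emph{every} reconstruction point $\Shelmnt$, since $D_n(y)$ ranges over all of $\chS_n$ as $y$ varies; this uniformity is what lets the estimate survive the averaging over $y$, and it is where the full strength of the SPO definition is used.

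Finally I would substitute the sphere-packing bound, pull the constant out of the sum over $y$, and use $\sum_y p(y) = 1$ to get $\PR{L, D \leq d, \Selmnt \in \cA_n} \leq 2^{n\BRA{R(d)-\gamma}} \cdot 2^{-n\BRA{R(d)+k_n}} = 2^{-n\BRA{\gamma + k_n}}$. Combining this with the atypical-set term and the opening decomposition gives $\PR{D > d} \geq \PR{L} - 2^{-n\BRA{\gamma + k_n}} - \PR{\Selmnt \notin \cA_n}$, which is the claim. The only delicate points are checking that the change of measure is legitimate (it is, since the channel input is a deterministic function of $\Selmnt$) and that the SPO bound is invoked with the decoder's reconstruction $D_n(y)$ as the ball center for each $y$.
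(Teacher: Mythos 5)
Your proof is correct and follows essentially the same route as the paper's: the same three-way decomposition of $\PR{L_{R(d)}^{\gamma}}$ (yours just peels off $\BRAs{D>d}$ and $\BRAs{\Selmnt \notin \cA_n}$ in two steps rather than one), the same change-of-measure step using the defining inequality of $L_{R(d)}^{\gamma}$, and the same application of the SPO bound to the distortion ball centered at $D_n(y)$ for each output $y$. No gaps; the points you flag as delicate (determinism of the encoder, uniformity of the SPO bound over reconstruction points) are exactly the ones the paper's steps $(a)$ and $(b)$ rely on.
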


\begin{proof} %(Sketch)
The term $\PR{L_{R(d)}^{\gamma}}$ can be bounded by:
\begin{align} \label{eq:con_uni_1}
&\PR{L_{R(d)}^{\gamma}} \leq \PR{D > d} + \PR{\Selmnt \notin \cA_n} + \PR{L_{R(d)}^{\gamma} \cap \BRA{D \leq d}\cap\BRA{\Selmnt \in \cA_n}}
\end{align}
Continues with the last term in \eqref{eq:con_uni_1}:
\begin{align} \label{eq:con_uni_2}
&\PR{L_{R(d)}^{\gamma} \cap \BRA{D \leq d}\cap\BRA{\Selmnt \in \cA_n}} \notag \\
& = \sum_{\substack{\Selmnt \in \cA_n\\ \BRA{x(\Selmnt),y} \in L_{R(d)}^{\gamma}\\ D \leq d}}p(\Selmnt)p(y|x(\Selmnt)) \notag  \\
&\overset{(a)}{\leq} \sum_{\substack{\Selmnt \in \cA_n, y \in \cY_n \\ D \leq d}}p(\Selmnt)p(y) \cdot  2^{nR(d)-n \gamma} \\
&= \sum_{y \in \cY_n }p(y) \cdot  2^{nR(d)-n \gamma} \cdot \sum_{\substack{\Selmnt \in \cA_n\\ \hat{s}^n=D_n(y), d(\Selmnt,\hat{s}^n) \leq d}}p(\Selmnt) \notag \\
&\overset{(b)}{\leq} \sum_{y \in \cY_n }p(y) \cdot  2^{nR(d)-n \gamma} \cdot 2^{-n\BRA{R(d)+k_n}} \notag \\
&= 2^{-n\BRA{\gamma+k_n}} \notag
\end{align}

where $(a)$ follows because $\BRA{\Selmnt,y} \in L_{R(d)}^{\gamma} \Rightarrow p(y|x(\Selmnt)) \leq p(y) \cdot 2^{nR(d)-n\gamma}$ and $(b)$ follows from the SPO assumption of the source \bS.
Combining \eqref{eq:con_uni_1} and \eqref{eq:con_uni_2} complete the proof of the inequality.
\end{proof}
\section{Source Channel Separation Theorems (Unicast)}\label{sec:uni_thrm}
\subsection{Maximum distortion }
\begin{theorem}[Source channel separation, maximum distortion] \label{sec:uni_thrm1} If the maximum distortion in the joint source channel coding of an SPO source \bS\ is $d_{max}$ then any rate less then $R(d_{max})$ is achievable, \ie\ the capacity of the channel \bW\ is at least $R(d_{max})$.
\end{theorem}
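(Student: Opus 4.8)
The plan is to feed the JSCC converse lemma (Lemma~\ref{lem:JSC_Converse}) the channel input process that the given scheme actually produces, and to read off from it a lower bound on the inf-information rate. Concretely, let $\bX_n = E_n(\bS_n)$ be the encoder-induced input. This is a legitimate input process, so $\underline{I}\BRA{\bX_n;\bY_n} \leq C = \sup_{\bX_n}\underline{I}\BRA{\bX_n;\bY_n}$, and it therefore suffices to prove $\underline{I}\BRA{\bX_n;\bY_n} \geq R(d_{max})$. The key observation is that for this choice of input the quantity $\PR{L_{R(d)}^{\gamma}}$ appearing in the lemma is exactly the information-spectrum tail $\PR{i\BRA{\bX_n;\bY_n} \leq R(d)-\gamma}$, since $L_{R(d)}^{\gamma}$ is the event $\{i(x(\Selmnt);y)\leq R(d)-\gamma\}$ and $x(\Selmnt)=E_n(\Selmnt)$.

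First I would fix an arbitrary target rate $R < R(d_{max})$ and, using the right-continuity of the rate-distortion curve $R(\cdot)$ at $d_{max}$, pick a distortion level $d > d_{max}$ with $R(d) > R$, together with a gap $\gamma = R(d)-R > 0$, so that $R(d)-\gamma = R$. Applying Lemma~\ref{lem:JSC_Converse} at this $d$ and $\gamma$ and rearranging gives
$$\PR{i\BRA{\bX_n;\bY_n} \leq R} = \PR{L_{R(d)}^{\gamma}} \leq \PR{D > d} + 2^{-n\BRA{\gamma+k_n}} + \PR{\Selmnt \notin \cA_n}.$$
Then I would let $n\to\infty$ and argue that each term on the right vanishes: $\PR{\Selmnt \notin \cA_n}\toinf 0$ by the SPO property; $2^{-n\BRA{\gamma+k_n}}\toinf 0$ because $k_n\toinf 0$ forces $\gamma+k_n \geq \gamma/2 > 0$ for all large $n$; and $\PR{D > d}\toinf 0$ because $d > d_{max}$ and the set of levels $\alpha$ with $\PR{D\geq\alpha}\toinf 0$ is upward closed with infimum $d_{max}$. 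Hence $\PR{i\BRA{\bX_n;\bY_n}\leq R}\toinf 0$, which yields $\underline{I}\BRA{\bX_n;\bY_n}\geq R$. Since $R < R(d_{max})$ was arbitrary, taking the supremum over such $R$ gives $\underline{I}\BRA{\bX_n;\bY_n}\geq R(d_{max})$, and therefore $C \geq R(d_{max})$, which is the claim.

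The main obstacle I anticipate lies precisely in the passage from $d > d_{max}$ back to $d_{max}$ itself. The converse bound is only usable for $\PR{D > d}$ with $d$ \emph{strictly} above $d_{max}$, since at $d = d_{max}$ one may well have $\PR{D > d_{max}}\not\toinf 0$; consequently the whole argument hinges on $\lim_{d \downarrow d_{max}} R(d) = R(d_{max})$, i.e.\ on the right-continuity of the rate-distortion function. This is where I would be careful, invoking the convexity and monotonicity of $R(\cdot)$, which guarantee right-continuity whenever $d_{max}$ lies in the interior of the interval on which $R$ is finite. The remaining ingredients, namely identifying $\PR{L_{R(d)}^{\gamma}}$ with the information-spectrum tail of the induced input and verifying that the two error terms are $o(1)$, are routine once the lemma and the SPO assumption are in hand.
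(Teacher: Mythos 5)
Your proposal runs on the same engine as the paper's proof: feed the encoder-induced input $\bX_n = E_n(\bS_n)$ into Lemma~\ref{lem:JSC_Converse}, identify $\PR{L_{R(d)}^{\gamma}}$ with the information-spectrum tail $\PR{i\BRA{\bX_n;\bY_n} \leq R(d)-\gamma}$ (correct, since the information density depends on $\Selmnt$ only through $x(\Selmnt)$), and conclude $C \geq \underline{I}\BRA{\bX_n;\bY_n} \geq R(d_{max})$. The difference is how the boundary at $d_{max}$ is handled. The paper applies the lemma at $d = d_{max}$ itself with a vanishing sequence $\gamma_n \toinf 0$ chosen so that $n\BRA{\gamma_n+k_n}\toinf\infty$, implicitly reading the hypothesis as $\PR{D > d_{max}} \toinf 0$; you instead keep $\gamma$ fixed, work at $d > d_{max}$, and then let $d \downarrow d_{max}$. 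Your motivation for the detour is a fair one: under the paper's literal definition of $d_{max}$ as an infimum, only $\PR{D > d}\toinf 0$ for $d$ \emph{strictly} above $d_{max}$ is guaranteed, so the paper's step ``$\limtoinf \PR{D>d_{max}}=0$'' is itself not fully justified.

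The price of your detour, however, is a genuine gap: the right-continuity $\lim_{d \downarrow d_{max}} R(d) = R(d_{max})$ is not available in the paper's generality, and your justification via convexity of $R(\cdot)$ does not apply. Convexity of a rate-distortion function is a consequence of time-sharing, which requires structure (memorylessness or stationarity) that a general SPO source need not have; here the source and the distortion measures $d_n$ are completely general, and nothing in the SPO definition forces $R(\cdot)$ to be convex, or even right-continuous. Indeed, although the ball probability $\PR{\Selmnt \in \cA_n,\, d(\Selmnt,\Shelmnt)\leq d}$ is right-continuous in $d$ for each fixed $n$, the limiting exponent can still jump down just to the right of $d_{max}$ when the jump location drifts with $n$, so $\lim_{d\downarrow d_{max}}R(d) < R(d_{max})$ is possible. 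What your argument actually proves is that every rate below $\sup_{d>d_{max}}R(d) = \lim_{d\downarrow d_{max}}R(d)$ is achievable, which can be strictly weaker than the theorem's claim. Note that if one adopts the reading $\PR{D>d_{max}}\toinf 0$ of the hypothesis (as the paper does), your own fixed-$\gamma$ computation applied directly at $d=d_{max}$ gives $\PR{i\BRA{\bX_n;\bY_n}\leq R(d_{max})-\gamma}\toinf 0$ for every fixed $\gamma>0$, hence $\underline{I}\BRA{\bX_n;\bY_n}\geq R(d_{max})$, with no continuity assumption at all --- and in fact this is cleaner than the paper's vanishing-$\gamma_n$ step, whose intermediate claim $\limtoinf\PR{i\BRA{x(\Selmnt);y}\leq R(d)}=0$ is stronger than what that step can deliver.
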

\begin{proof}
Fix a sequence $\gamma_n \toinf 0$ such that $n\BRA{\gamma_n+k_n} \toinf \infty$. Let $X_n=E_n(S_n)$ and $Y_n$ the output of $X_n$ through the channel \bW.
From lemma \ref{lem:JSC_Converse} we have:
$$ \PR{D > d_{max}} \geq \PR{L_{R(d_{max})}^{\gamma_n}}-2^{-n \BRA{\gamma_n+k_n}}+o(1) $$
Since $ \limtoinf \PR{D > d_{max}} = 0 $ and $2^{-n\BRA{\gamma_n+k_n}} \toinf 0 $ we get $ \limtoinf \PR{L_{R(d_{max})}^{\gamma_n}} = 0 $. Since $\gamma_n \toinf 0$ this implies that:
$$ \limtoinf \PR{i \BRA{x(\Selmnt);y}\leq R(d)} = 0 $$
which proves that $ \underline{I}\BRA{\bX_n; \bY_n} \geq R(d)$ so that channel capacity is at least $R(d)$ and separation is possible.
\end{proof}

\subsection{Average distortion }
\begin{theorem}[Joint Source channel Separation, average distortion] \label{sec:uni_thrm2}
For an optimal JSCC scheme, of an SPO source over a channel that satisfy the strong converse property, the notion of average distortion and maximum distortion coincide.
%If the average distortion in the joint source channel coding of an SPO source \bS\ is $d_{avg}$ and the source is SPO, then any rate less then $R(d_{avg})$ is achievable, \ie\ the capacity of the channel \bW\ is at least $R(d_{avg})$.
\end{theorem}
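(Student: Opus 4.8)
The plan is to reduce the claim to a single concentration statement about the distortion $D$ and then extract that statement from the strong converse by feeding it into the converse lemma just proved. Let $d^*$ be defined by $R(d^*)=C$, where $C$ is the capacity of \bW. Since a separation scheme achieves maximum distortion $d$ for any $d$ with $R(d)<C$, the separation strategy attains $d_{max}=d_{avg}=d^*$ at the operating point $R(d^*)=C$; so it suffices to prove the matching converse, namely that no scheme can push either distortion below $d^*$. Concretely, I would show that for every JSCC scheme and every $d<d^*$ one has $\PR{D>d}\toinf 1$. This single fact settles both notions at once: by the definition of $d_{max}$ as the infimum of the $\alpha$ with $\PR{D\geq\alpha}\toinf 0$ it forces $d_{max}\geq d^*$, while the elementary bound $\ED\geq d\cdot\PR{D>d}$ (valid since $D\geq 0$) forces $d_{avg}=\ED\geq d^*$. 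Hence both optimal values equal $d^*$ and the two notions coincide.

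For the main estimate I would apply Lemma \ref{lem:JSC_Converse} at level $d$ with a sequence $\gamma_n\toinf 0$ chosen so that $n\BRA{\gamma_n+k_n}\toinf\infty$, exactly as in the proof of Theorem \ref{sec:uni_thrm1}. This yields $\PR{D>d}\geq\PR{L_{R(d)}^{\gamma_n}}-2^{-n\BRA{\gamma_n+k_n}}-\PR{\Selmnt\notin\cA_n}$, and the last two terms vanish, the first by the choice of $\gamma_n$ and the second by the SPO assumption. The problem is thereby reduced to showing $\PR{L_{R(d)}^{\gamma_n}}\toinf 1$, that is $\PR{i\BRA{x(\Selmnt);y}\leq R(d)-\gamma_n}\toinf 1$, for the channel input $\bX_n=E_n(\bS_n)$ induced by the encoder.

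The last step is where the strong converse enters. The key observation is that the strong converse property forces the sup-information rate of \emph{every} input, in particular of the induced $\bX_n$, to be bounded by capacity: $\overline{I}\BRA{\bX_n;\bY_n}\leq C$. Granting this, since $d<d^*$ gives $R(d)>C$ and $\gamma_n\toinf 0$, for all large $n$ we have $R(d)-\gamma_n>C+\delta$ for some fixed $\delta>0$, whence $\PR{i\BRA{x(\Selmnt);y}>R(d)-\gamma_n}\leq\PR{i\BRA{x(\Selmnt);y}>C+\delta}\toinf 0$ by the definition of $\overline{I}$; equivalently $\PR{L_{R(d)}^{\gamma_n}}\toinf 1$, which closes the argument.

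I expect the bound $\overline{I}\BRA{\bX_n;\bY_n}\leq C$ to be the main obstacle, and the only place the strong converse is genuinely used. Its justification is the Han--\verdu characterization that the strong converse property is equivalent to $\sup_{\bX_n}\overline{I}\BRA{\bX_n;\bY_n}=\sup_{\bX_n}\underline{I}\BRA{\bX_n;\bY_n}=C$: if some input had $\overline{I}>C$, then a positive fraction of the information-density spectrum would sit above capacity along a subsequence, and a Feinstein-type selection would produce codes of rate exceeding $C$ with error probability bounded away from $1$, contradicting the strong converse. The delicate points I would watch are that the bound must hold for the possibly non-capacity-achieving input produced by the encoder (which is fine, because the supremum over inputs is itself $C$, so $\overline{I}\BRA{\bX_n;\bY_n}\leq\sup_{\bX_n}\overline{I}\leq C$), and the customary pessimistic-versus-optimistic subsequence subtleties in passing between the $C_\epsilon$ definition and the spectral quantity.
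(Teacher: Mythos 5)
Your proposal is correct in substance but takes a genuinely different route from the paper. The paper argues by contradiction and stays local to the given scheme: if the two notions differ, then $\PR{D > \ED + \tau_1} \geq \epsilon_1$ along a subsequence, and Lemma~\ref{lem:pos_D_mean} (the anti-concentration lemma) forces mass \emph{below} the mean, so $\PR{D > \ED - \tau}$ is bounded away from $1$; Lemma~\ref{lem:JSC_Converse} then bounds $\PR{L_{R(\ED-\tau)}^{\gamma_n}}$ away from $1$, a Feinstein-type step yields codes of rate close to $R(\ED-\tau)$ with error probability bounded away from $1$, and the strong converse is used purely operationally (exactly as defined in the paper: error $<1$ upgrades to error $\to 0$) to get reliable transmission at rate $R(\ED-\tau)$; separation then produces a scheme with maximum (hence, via $R_{fm}=R_{fa}$, average) distortion $\ED-\tau < \ED$, contradicting optimality. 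You instead identify the operating point $d^* = R^{-1}(C)$, invoke the Verd{\'u}--Han equivalence (strong converse $\Leftrightarrow \sup_{\bX_n}\overline{I}\BRA{\bX_n;\bY_n} = C$, Theorem~7 of \cite{DBLP:journals/tit/VerduH94}) to bound the sup-information rate of the encoder-induced input, and push this through Lemma~\ref{lem:JSC_Converse} to get the pointwise strong converse $\PR{D>d} \to 1$ for every $d < d^*$. Your route avoids Lemma~\ref{lem:pos_D_mean} entirely and never constructs a competing scheme; in exchange it needs a heavier external ingredient (inside whose proof the paper's Feinstein-plus-upgrade step reappears, with the same optimistic-versus-pessimistic subsequence caveat you flag) and regularity of $R(\cdot)$ (existence of $d^*$, and $R(d)>C$ strictly for $d<d^*$) that the paper's direct comparison of $\ED-\tau$ against $\ED$ sidesteps. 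What your approach buys is a strictly stronger conclusion: it pins down the common optimal value as $d^*$ and shows the distortion spectrum concentrates there.

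One step you should not gloss over: the theorem concerns a \emph{single} optimal scheme, whereas your conclusion ``both optimal values equal $d^*$'' is a statement about infima over all schemes. If the given scheme is optimal in the average sense, so $\ED \to d^*$, you must still show that \emph{its} $d_{max}$ equals $d^*$. Your pointwise converse does close this, but only after an extra argument: for any $\delta,\Delta>0$,
$\ED \geq \BRA{d^*+\Delta}\PR{D > d^*+\Delta} + \BRA{d^*-\delta}\BRA{\PR{D > d^*-\delta} - \PR{D > d^*+\Delta}}$,
and since $\PR{D > d^*-\delta} \to 1$ while $\ED \to d^*$, this forces $\limsuptoinf \PR{D > d^*+\Delta} \leq \delta/\BRA{\Delta+\delta}$ for every $\delta$, hence $\PR{D > d^*+\Delta} \to 0$ and $d_{max} \leq d^*$. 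In the other direction (a max-optimal scheme having $d_{avg}=d^*$), you need bounded distortion or a reference word so that the expectation follows the spectrum---the same assumption hidden in your claim that separation attains $d_{max}=d_{avg}=d^*$, and the one the paper leans on through $R_{fm}=R_{fa}$. With these two patches your argument is complete.
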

\begin{proof}
Suppose not. Then there exist $\tau_1$ and $ \epsilon_1 $ such that: $$\limsuptoinf \PR{D > \ED+\tau_1} \geq \epsilon_1$$
Fix a sequence $\gamma_n \toinf 0$ such that $n\BRA{\gamma_n+k_n} \toinf \infty$. By lemma \ref{lem:pos_D_mean} we have: $\liminftoinf \PR{D > \ED-\tau} > 1-\epsilon$ for some $\tau>0$ and $\epsilon>0$).
Now:
\begin{align} \label{eq:uni_thm_1}
1-\epsilon &\geq \liminftoinf \PR{D > \ED-\tau} \notag \\
&\geq \liminftoinf \BRA{\PR{L_{R(\ED-\tau)}^{\gamma_n}}-2^{-n\BRA{\gamma_n+k_n}}+o(1) }
\end{align}
For $n$ large enough we have that $\liminftoinf \PR{L_{R(\ED-\tau)}^{\gamma_n}}$ bound away from 1, so we can transmit at rate $R(\ED-\tau)-\gamma_n$ with error probability less then 1. Since the channel satisfy the strong converse, this means that we can transmit at rate $R(\ED-\tau)$ with error approach 0, so the JSCC is not optimal because with separation we can transmit at maximum distortion level $\ED-\tau$ which is better the average $\ED$. (We used here the fact that $R_{fm}(d) = R_{fa}(d)$).
\end{proof}

For the average distortion rate we can integrate the bound in lemma \ref{lem:JSC_Converse} to get:
\begin{theorem}[General lower bound on JSCC distortion] \label{sec:uni_thrm2}
If there is a bound on the maximum distortion $d_{max}$ then:
$$ \ED \geq \int_{0}^{d_{max}}\PR{L_{R(\alpha)}^{\gamma}}d\alpha -d_{max} \cdot 2^{-n \gamma+o(n)} - d_{max} \cdot \PR{\Selmnt \notin \cA_n} $$
\end{theorem}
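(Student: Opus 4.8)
The plan is to obtain the bound by integrating the pointwise JSCC converse inequality of Lemma~\ref{lem:JSC_Converse} over the distortion levels $\alpha \in [0, d_{max}]$. The starting point is the tail (layer-cake) representation of the expectation of a non-negative random variable,
$$ \ED = \int_0^{\infty} \PR{D > \alpha}\, d\alpha. $$
Since the integrand is non-negative, discarding the contribution of $[d_{max}, \infty)$ can only decrease the right-hand side, so that
$$ \ED \geq \int_0^{d_{max}} \PR{D > \alpha}\, d\alpha. $$
Using an inequality here rather than an equality is the point that lets us restrict attention to the finite window $[0, d_{max}]$ without assuming a hard almost-sure bound $D \leq d_{max}$: by the definition of the maximum distortion level, $D$ exceeds $d_{max}$ only with vanishing probability, and that excess mass is simply dropped.

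Next, I would apply Lemma~\ref{lem:JSC_Converse} at each level, taking $d = \alpha$, to get the pointwise bound
$$ \PR{D > \alpha} \geq \PR{L_{R(\alpha)}^{\gamma}} - 2^{-n\BRA{\gamma + k_n}} - \PR{\Selmnt \notin \cA_n}. $$
Integrating this inequality over $\alpha \in [0, d_{max}]$ and using monotonicity of the integral, the two $\alpha$-independent correction terms each pick up a factor of $d_{max}$ from the length of the integration interval, yielding
$$ \int_0^{d_{max}} \PR{D > \alpha}\, d\alpha \geq \int_0^{d_{max}} \PR{L_{R(\alpha)}^{\gamma}}\, d\alpha - d_{max}\cdot 2^{-n\BRA{\gamma+k_n}} - d_{max}\cdot \PR{\Selmnt \notin \cA_n}. $$

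The final step is to absorb the source-overhead sequence $k_n$ into the $o(n)$ term in the exponent. Since $k_n \to 0$ we have $n k_n = o(n)$, hence $2^{-n\BRA{\gamma+k_n}} = 2^{-n\gamma + o(n)}$, which matches the exponent in the claimed statement. Chaining this with the first-step inequality $\ED \geq \int_0^{d_{max}} \PR{D > \alpha}\, d\alpha$ closes the argument.

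I do not expect a serious obstacle here, since the statement is essentially the integrated form of Lemma~\ref{lem:JSC_Converse}. The only technical points to verify are that the map $\alpha \mapsto \PR{L_{R(\alpha)}^{\gamma}}$ is measurable and bounded so that the integral is well defined (it is, because $R(\cdot)$ is monotone and probabilities lie in $[0,1]$), and that the tail-integral identity is invoked as an inequality, which is exactly what frees the result from requiring a deterministic upper bound on $D$.
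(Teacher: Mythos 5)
Your proposal is correct and follows essentially the same route as the paper's (very terse) proof: apply the layer-cake representation $\ED = \int_0^\infty \PR{D>\alpha}\,d\alpha$ and integrate Lemma~\ref{lem:JSC_Converse} over $\alpha \in [0,d_{max}]$, absorbing $k_n$ into the $o(n)$ exponent. If anything, your version is slightly more careful than the paper, which writes the truncation as an equality $\ED = \int_0^{d_{max}}\PR{D>\alpha}\,d\alpha$ even though $d_{max}$ is not an almost-sure bound on $D$; your use of the one-sided inequality fixes that without changing the argument.
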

\begin{proof}
Use $\ED = \int_{0}^{d_{max}}\PR{D>\alpha}d\alpha $ and lemma \ref{lem:JSC_Converse}.
\end{proof}
\section{The Multiple Access Case}
In this section we demonstrate in a loose manner how the same method can applied to more general communication situations. Specifically, a 2-users multiple access channel. Let $\bW$ now represent a multiple access channel with 2 inputs -  $ \bW = \BRAs{ W_n : \cX_{1,n} \times \cX_{2,n}\ \rightarrow \cY_n } $ and the sources $\bS_1, \bS_2$ are two uncorrelated SPO sources.
Let:
$$ L_{R_1(d)}^{\gamma} = \BRAs{ \BRA{\Selmnto,\Selmntt,y} : i \BRA{x_1(\Selmnto);y|x_2(\Selmnto)} \leq R_1(d)-\gamma}$$
$$ L_{R_2(d)}^{\gamma} = \BRAs{ \BRA{\Selmnto,\Selmntt,y} : i \BRA{x_2(\Selmntt);y|x_1(\Selmnto)} \leq R_2(d)-\gamma}$$
$$ L_3^{\gamma}        = \BRAs{ \BRA{\Selmnto,\Selmntt,y} : i \BRA{x_1(\Selmnto), x_2(\Selmntt);y} \leq R_1(d)+R_2(d)-\gamma}$$
Also, let $T=L_{R_1(d)}^{\gamma}\cup L_{R_2(d)}^{\gamma} \cup L_3^{\gamma}$
This is exactly the event which is used in the definition of channel capacity in term of information spectrum.

\begin{lemma}[JSCC Converse lemma, Multiple Access ]\label{lem:JSC_Converse_MA}
If the source $\bS_1, \bS_2$ are SPO, then:
$$ \PR{D_1 > d_1 \cup D_2 > d_2} \geq \PR{T} -2^{-n \gamma+o(n)} - \PR{\Selmnto \notin A_{n1}} - \PR{\Selmntt \notin A_{n2}}$$
\end{lemma}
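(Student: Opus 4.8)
The plan is to mirror the structure of the single-user proof in Lemma~\ref{lem:JSC_Converse} and extend it to the three-way decomposition of the multiple access error set. The key observation is that the event $\BRA{D_1 > d_1 \cup D_2 > d_2}$ fails exactly when both sources are decoded within their respective distortion levels, and on this good event I want to show that the probability contribution from $T$ is negligible (up to the $2^{-n\gamma + o(n)}$ slack). So I would start by writing, in analogy with \eqref{eq:con_uni_1},
\begin{align*}
\PR{T} &\leq \PR{D_1 > d_1 \cup D_2 > d_2} + \PR{\Selmnto \notin A_{n1}} + \PR{\Selmntt \notin A_{n2}} \\
&\quad + \PR{T \cap \BRA{D_1 \leq d_1} \cap \BRA{D_2 \leq d_2} \cap \BRA{\Selmnto \in A_{n1}} \cap \BRA{\Selmntt \in A_{n2}}},
\end{align*}
and the whole task reduces to bounding that last joint-good-event term by $2^{-n\gamma + o(n)}$.

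Next I would bound the last term by splitting $T = L_{R_1(d)}^{\gamma} \cup L_{R_2(d)}^{\gamma} \cup L_3^{\gamma}$ via the union bound into three pieces, each intersected with the good event, and handle each piece exactly as in \eqref{eq:con_uni_2}. For the $L_{R_1(d)}^{\gamma}$ piece I would use the defining inequality $i\BRA{x_1(\Selmnto);y|x_2(\Selmntt)} \leq R_1(d) - \gamma$ to replace the conditional channel law $p(y \mid x_1, x_2)$ by $p(y \mid x_2)\cdot 2^{nR_1(d) - n\gamma}$, then sum over $\Selmnto \in A_{n1}$ decoding to within distortion $d_1$ of the (now fixed) reconstruction; the SPO property of $\bS_1$ gives a factor $2^{-n(R_1(d)+k_{n1})}$, leaving $2^{-n(\gamma + k_{n1})}$ after the remaining outer sums collapse to one. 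The $L_{R_2(d)}^{\gamma}$ piece is symmetric, using SPO of $\bS_2$. For the joint piece $L_3^{\gamma}$ I would instead bound $p(y \mid x_1, x_2)$ against $p(y)\cdot 2^{n(R_1(d)+R_2(d)) - n\gamma}$ and then use the \emph{product} SPO bound $2^{-n(R_1(d)+k_{n1})}\cdot 2^{-n(R_2(d)+k_{n2})}$, which holds because the two sources are uncorrelated so their joint measure factors and the spheres in the two coordinates multiply.

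The main obstacle I anticipate is this joint term: it requires that the probability of the product set $\BRAs{\Selmnto \in A_{n1}, \Selmntt \in A_{n2} : d(\Selmnto,\hat{s}_1) \leq d_1,\ d(\Selmntt,\hat{s}_2) \leq d_2}$ be at most the product of the two individual SPO sphere-packing bounds. This is where the uncorrelated (independent) assumption on $\bS_1, \bS_2$ is essential, letting the joint prior factor as $p(\Selmnto)p(\Selmntt)$ so that the two sphere-packing estimates apply independently. Collecting the three bounds gives three terms of the form $2^{-n(\gamma + k_{ni})}$ and one of the form $2^{-n(\gamma + k_{n1} + k_{n2})}$; absorbing the $k_{ni}\to 0$ terms and the union-bound overhead into a single $2^{-n\gamma + o(n)}$, and substituting back into the first display, yields the claimed inequality.
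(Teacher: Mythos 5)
Your proposal is correct and follows essentially the same route as the paper's proof: the identical decomposition of $\PR{T}$ into the error event, the atypicality events, and the joint good-event term, then a union bound over $L_{R_1(d)}^{\gamma}$, $L_{R_2(d)}^{\gamma}$, $L_3^{\gamma}$ with the information-density inequality converting $p(y|x_1,x_2)$ into a reference measure times $2^{nR_i(d)-n\gamma}$, followed by the SPO sphere-packing bound. The only difference is that the paper works out just the $L_{R_1(d)}^{\gamma}$ term and asserts the rest "follow the same lines," whereas you explicitly (and correctly) spell out that the $L_3^{\gamma}$ term needs the \emph{product} of the two sphere-packing bounds, which the independence of $\bS_1,\bS_2$ justifies.
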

\begin{proof} %(Sketch)
The term $\PR{T}$ can be bounded by:
\begin{align} \label{eq:conv_multi_1}
&\PR{T} \leq \PR{D_1 > d_1 \cup D_2 > d_2} \notag \\
&+\PR{\Selmnto \notin A_{n1}} + \PR{\Selmntt \notin A_{n2}} \\
&+ \PR{T \cap \BRA{D_1 \leq d_1} \cap \BRA{D_2 \leq d_2}\cap\BRA{\Selmnto \in A_{n1}} \cap\BRA{\Selmntt \in A_{n2}} } \notag
\end{align}
Using the union bound on the $T$ term we get 3 terms which can be bounded like before. We'll demonstrate for the term which contains $ L_{R_1(d)}^{\gamma}$, the others follow the same lines.
\begin{align} \label{eq:conv_multi_2}
&\PR{L_{R_1(d)}^{\gamma} \cap \BRA{D_1 \leq d_1} \cap \BRA{D_2 \leq d_2}\cap\BRA{\Selmnto \in A_{n1}} \cap\BRA{\Selmntt \in A_{n2}} } \notag \\
&\leq \PR{L_{R_1(d)}^{\gamma} \cap \BRA{D_1 \leq d_1} \cap\BRA{\Selmnto \in A_{n1}} } \notag \\
& = \sum_{...}p(\Selmnto)p(\Selmntt)p(y|x(\Selmnto), x(\Selmntt)) \notag \\
& \overset{(a)}{\leq} \sum_{...}p(\Selmnto)p(\Selmntt)p(y|x(\Selmntt)) \cdot  2^{nR_1(d)-n \gamma}  \notag \\
& = \sum_{...}p(\Selmntt)p(y|x(\Selmntt)) \cdot  2^{nR_1(d)-n \gamma} \sum_{...}p(\Selmnto) \notag \\
& \overset{(b)}{\leq} \sum_{...}p(\Selmntt)p(y|x(\Selmntt)) \cdot  2^{nR_1(d)-n \gamma} \cdot 2^{-nR_1(d)+o(n)}  \notag \\
& = 2^{-n\gamma+o(n) }
\end{align}

where $(a)$ follows because $p(y|x(\Selmnto), x(\Selmntt)) \leq p(y|x(\Selmntt)) \cdot  2^{nR_1(d)-n \gamma} $ for $\BRA{\Selmnto, \Selmntt,y} \in L_{R_1(d)}^{\gamma}$ and $(b)$ follows from the SPO assumption of the source $\bS_1$.
Combining \eqref{eq:conv_multi_1} and \eqref{eq:conv_multi_2} complete the proof of the inequality.
\end{proof}

We can use this lemma to prove 2 JSCC Seperation analog to \ref{sec:uni_thrm1} and \ref{sec:uni_thrm2}.

\section{Further Research}
There are several ways to continue with this research.
\begin{itemize}
  \item SPO sources - Find out which sources are SPO's. Furthermore, examine whether the SPO condition can be relaxed. 
  % One example for such relaxation, is giving up the requirement that for all $\Shelmnt$, the second condition must be met.
  \item Another interesting question is whether sources for which the average distortion rate equals the maximum distortion rate are SPO's. If this result holds, it will provide a source channel separation for ergodic stationary sources with sub-additive distortion measure.
  \item Examine the average distortion case for source transmission over channels that do not satisfy the strong converse.
  \item Network case: Examine the correlated source \cite{tian2010optimality}.
\end{itemize}
These problems and probably additional questions are left for further research.
\appendices

\section{SPO Sources} \label{App:AppendixA}
\subsection{Gaussian Source with mean square distortion}
Here we provide sketch to prove that gaussian source with mean square distortion is SPO. Let $\bS$ be a gaussian source with variable $\sigma^2$.
Let $ \varepsilon_n \geq 0$ be a sequence such that $ \limtoinf \varepsilon_n = 0$ and $ \limtoinf n \cdot \varepsilon_n^2 = \infty $. $\cA_n = \BRAs{X^n: \left| \frac{1}{n}\sum_{i=1}^n X_i^2 - \sigma^2 \right| \leq \varepsilon_n} $. By the Chebyshev's inequality we have $\PR{\cA_n} \toinf 1$. For any reproduction vector $y^n=\BRA{y_i}$ define $ d = \frac{1}{n}\sum_{i=1}^{n} (X_i -y_i )^2 $. We need to prove that:
$$\PR{\BRA{d \leq D} \cap \cA_n } \leq 2^{-nR(D)+o(n)} = 2^{-\frac{n}{2}\log\BRA{\frac{\sigma^2}{D}}+o(n)} $$

\begin{align} \label{eq:gauss_spo1}
&\PR{\BRA{d \leq D} \cap \cA_n } \notag \\
& = \int_{X^n \in \cA_n, d \leq D}\BRA{2\pi\sigma^2}^{-\frac{n}{2}}e^{-\frac{\sum_{i=1}^n X_i^2}{2\sigma^2}} dX \notag \\
& \leq \int_{X^n \in \cA_n, d \leq D}\BRA{2\pi\sigma^2}^{-\frac{n}{2}}e^{-n\frac{\sigma^2 - \varepsilon_n}{2\sigma^2}} dX \notag \\
& \leq \BRA{2\pi\sigma^2}^{-\frac{n}{2}}e^{-n\frac{\sigma^2 - \varepsilon_n}{2\sigma^2}} \int_{d \leq D} dX \notag \\
& \overset{(a)}{=} \BRA{2\pi\sigma^2}^{-\frac{n}{2}} e^{-\frac{n}{2}+n\frac{\varepsilon_n}{2\sigma^2}} \frac{\pi^{\frac{n}{2}}}{\Gamma \BRA{\frac{n}{2}+1}} \BRA{\sqrt{nD}}^{n} \notag \\
& \overset{(b)}{=} \BRA{2\sigma^2}^{-\frac{n}{2}} e^{-\frac{n}{2}+n\frac{\varepsilon_n}{2\sigma^2}} \frac{1}{\BRA{\frac{n}{2}}!} \BRA{nD}^{\frac{n}{2}} \notag \\
& \overset{(c)}{\approx} \BRA{2\sigma^2}^{-\frac{n}{2}} e^{-\frac{n}{2}+n\frac{\varepsilon_n}{2\sigma^2}} \frac{1}{\BRA{\frac{n}{2e}}^{\frac{n}{2}}} \BRA{nD}^{\frac{n}{2}} \notag \\
& = \BRA{\sigma^2}^{-\frac{n}{2}}e^{n\frac{\varepsilon_n}{2\sigma^2}} \BRA{D}^{\frac{n}{2}} \notag \\
& = 2^{-\frac{n}{2} \log \BRA{\frac{\sigma^2}{D}}} e^{n\frac{\varepsilon_n}{2\sigma^2}} \notag
\end{align}
$(a)$ follows from the well known formula for the volume of $n$-dimensional sphere with radius $\sqrt{nD}$. \\
$(b)$ follows because we canceled $\pi$, and assuming $n$ is even for which value the formula for $\Gamma$ because simpler. \\
$(c)$ is the stirling's approximation where we ignored the $\sqrt{2 \pi n}$ term which is not contributing to the exponent.
%As we can choose arbitrary small $\varepsilon$ this complete the proof.

\subsection{DMC with finite distortion}
A proof that uses the method of type can be given along the lines of lemma 1 in \cite{lomnitz2011communication}.
We'll just note that we can control the type's of $x$ by the intersection with $\cA_n$.
\subsection{Notions of distortions}

\begin{remark} For a source that satisfy the SPO property we can show that the average $R_{fa}(d)$ and the maximum rate function $R_{fm}(d)$ coincide.
\end{remark}

To see this, let $Y$ be the result of encoding and decoding with average distortion $d$ and assume that $R_{fa}(d) < R_{fm}(d)$. Then we have:
\begin{align*} %\label{def:spo_avg_max}
\PR{ \BRA{d\BRA{X,Y} \leq \alpha} \cap \cA_n} &\leq \sum_{i=1}^{2^{nR_{fa}(d)}} \PR{ \BRA{d\BRA{X,y_i} \leq \alpha} \cap \cA_n} \\
&\leq 2^{nR_{fa}(d)} \cdot 2^{-n\BRA{R_{fm}(\alpha)+k_n}} \\
&\leq 2^{n\BRA{R_{fa}(d)-R_{fm}(d)+k_n}} \toinf 0
\end{align*}

Continue:
\begin{align*} %\label{def:spo_avg_max1}
& 1-\PR{\BRA{d\BRA{X,Y} \leq \alpha} \cap \cA_n} \\ % &= \PR{\BRA{\BRA{d\BRA{X,Y} \leq \alpha} \cap \cA_n}^c} \\
&= \PR{ \BRA{d\BRA{X,Y} > \alpha} \cup \cA_n^c } \\
& \leq \PR{ d\BRA{X,Y} > \alpha } + \PR{ \cA_n^c }
\end{align*}
Rearranging we get:
\begin{align*} %\label{def:spo_avg_max7}
\PR{ d\BRA{X,Y} > \alpha } \geq 1-\PR{\BRA{d\BRA{X,Y} \leq \alpha} \cap \cA_n}-\PR{ \cA_n^c }
\end{align*}

From:
\begin{align*}
\E{d\BRA{X,Y}} &= \int_{0}^{\infty}\PR{ d\BRA{X,Y}  > \alpha } d\alpha \\
& \geq \int_{0}^{d}\PR{ d\BRA{X,Y}  > \alpha } d\alpha \\
& \geq \int_{0}^{d}\BRA{1-\PR{\BRA{d\BRA{X,Y} \leq \alpha} \cap \cA_n}-\PR{ \cA_n^c }} d\alpha \\
& = d - \int_{0}^{d}\PR{\BRA{d\BRA{X,Y} \leq \alpha} \cap \cA_n} d\alpha - d \cdot \PR{ \cA_n^c }\\
\end{align*}
The results then follows because $\PR{ \cA_n^c } \toinf 0$ and $\PR{\BRA{d\BRA{X,Y} \leq \alpha} \cap \cA_n} \toinf 0$ for $ \alpha < d $.

\begin{remark} For a source that satisfy the SPO property we can show that the average $R_{fm}(d)$ and the maximum rate function $R_{vm}(d)$ coincide.
\end{remark}
To see this, let $l_i$ be the length of the $i^{th}$ code word ordered according to their probability. By using kraft's inequality it can be shown that $R = \frac{1}{n} E \BRA{l_i} \geq \frac{1}{n}H(l_i)$, \eg\ \cite[Theorem 5.6.1]{koga2002information}. Since $\PR{l_i} \leq 2^{-nR_{fm}(d)}$ it follows that $H(l_i) \geq nR_{fm}(d)$.

\section{Proof of lemma \ref{lem:pos_D_mean}} \label{App:AppendixB}

\begin{proof}
For a non-negative random variable we have: $ \mu = \int_{0}^{\infty}\PR{D>x}dx $. Now:
\begin{align*}
\int_{\mu}^{\infty}\PR{D>x}dx
&\geq \int_{\mu}^{\mu+d_1}\PR{D>x}dx \\
&\geq \int_{\mu}^{\mu+d_1}\epsilon_1dx = d_1 \epsilon_1
\end{align*}
So we have: $ \int_{0}^{\mu}\PR{D>x}dx < \mu - d_1 \epsilon_1$. Let $d_2$ and $\epsilon_2$ be such that $(\mu-d_2)(1-\epsilon_2) > \mu-d_1 \epsilon_1$. We must have:  $\PR{D < \mu-d_2} > \epsilon_2$ because otherwise: %The same argument show that for $d_2 = \frac{1}{\mu - d_1 \epsilon_1}$ we must have
\begin{align}
\mu - d_1 \epsilon_1 & > \int_{0}^{\mu}\PR{D>x}dx \notag \\
&> \int_{0}^{\mu-d_2}\PR{D>x}dx \\
&> \int_{0}^{\mu-d_2}\BRA{1-\epsilon_2}dx \\
& = \BRA{\mu-d_2}\BRA{1-\epsilon_2} \notag
\end{align}
\end{proof}

\bibliographystyle{IEEEtran}
\bibliography{bibfile}

% Generated by IEEEtran.bst, version: 1.13 (2008/09/30)
\begin{thebibliography}{10}
\providecommand{\url}[1]{#1}
\csname url@samestyle\endcsname
\providecommand{\newblock}{\relax}
\providecommand{\bibinfo}[2]{#2}
\providecommand{\BIBentrySTDinterwordspacing}{\spaceskip=0pt\relax}
\providecommand{\BIBentryALTinterwordstretchfactor}{4}
\providecommand{\BIBentryALTinterwordspacing}{\spaceskip=\fontdimen2\font plus
\BIBentryALTinterwordstretchfactor\fontdimen3\font minus
  \fontdimen4\font\relax}
\providecommand{\BIBforeignlanguage}[2]{{%
\expandafter\ifx\csname l@#1\endcsname\relax
\typeout{** WARNING: IEEEtran.bst: No hyphenation pattern has been}%
\typeout{** loaded for the language `#1'. Using the pattern for}%
\typeout{** the default language instead.}%
\else
\language=\csname l@#1\endcsname
\fi
#2}}
\providecommand{\BIBdecl}{\relax}
\BIBdecl

\bibitem{agarwal2013universal}
M.~Agarwal, S.~Mitter, and A.~Sahai, ``A universal, operational theory of
  unicast multi-user communication with fidelity criteria,'' \emph{arXiv
  preprint arXiv:1302.5860}, 2013.

\bibitem{DBLP:journals/tit/VerduH94}
S.~Verd{\'u} and T.~S. Han, ``A general formula for channel capacity,''
  \emph{IEEE Transactions on Information Theory}, vol.~40, no.~4, pp.
  1147--1157, 1994.

\bibitem{shannon1959coding}
C.~E. Shannon, ``Coding theorems for a discrete source with a fidelity
  criterion,'' \emph{IRE Nat. Conv. Rec}, vol.~4, no. 142-163, 1959.

\bibitem{gallager1968information}
\BIBentryALTinterwordspacing
R.~Gallager, \emph{Information theory and reliable communication}.\hskip 1em
  plus 0.5em minus 0.4em\relax Wiley, 1968. [Online]. Available:
  \url{http://books.google.co.il/books?id=Uc3uAAAAMAAJ}
\BIBentrySTDinterwordspacing

\bibitem{vembu1995source}
S.~Vembu, S.~Verdu, and Y.~Steinberg, ``The source-channel separation theorem
  revisited,'' \emph{Information Theory, IEEE Transactions on}, vol.~41, no.~1,
  pp. 44--54, 1995.

\bibitem{gastpar2003code}
M.~Gastpar, B.~Rimoldi, and M.~Vetterli, ``To code, or not to code: Lossy
  source-channel communication revisited,'' \emph{Information Theory, IEEE
  Transactions on}, vol.~49, no.~5, pp. 1147--1158, 2003.

\bibitem{DBLP:journals/tit/VerduS10}
S.~Verd{\'u} and S.~Shamai, ``Variable-rate channel capacity,'' \emph{IEEE
  Transactions on Information Theory}, vol.~56, no.~6, pp. 2651--2667, 2010.

\bibitem{equitz1991successive}
W.~H.~R. Equitz and T.~M. Cover, ``Successive refinement of information,''
  \emph{Information Theory, IEEE Transactions on}, vol.~37, no.~2, pp.
  269--275, 1991.

\bibitem{rimoldi1994successive}
B.~Rimoldi, ``Successive refinement of information: Characterization of the
  achievable rates,'' \emph{Information Theory, IEEE Transactions on}, vol.~40,
  no.~1, pp. 253--259, 1994.

\bibitem{tian2008successive}
C.~Tian, A.~Steiner, S.~Shamai, and S.~N. Diggavi, ``Successive refinement via
  broadcast: Optimizing expected distortion of a gaussian source over a
  gaussian fading channel,'' \emph{Information Theory, IEEE Transactions on},
  vol.~54, no.~7, pp. 2903--2918, 2008.

\bibitem{YuvalKochmanConverse}
D.~Wang, A.~Ingber, and Y.~Kochman, ``A strong converse for joint
  source-channel coding,'' in \emph{Information Theory Proceedings (ISIT), 2012
  IEEE International Symposium on}.\hskip 1em plus 0.5em minus 0.4em\relax
  IEEE, 2012, pp. 2117--2121.

\bibitem{koga2002information}
H.~Koga \emph{et~al.}, \emph{Information-spectrum methods in information
  theory}.\hskip 1em plus 0.5em minus 0.4em\relax Springer, 2002, vol.~50.

\bibitem{somekh2006general}
A.~Somekh-Baruch and S.~Verd{\'u}, ``General relayless networks: representation
  of the capacity region,'' in \emph{Information Theory, 2006 IEEE
  International Symposium on}.\hskip 1em plus 0.5em minus 0.4em\relax IEEE,
  2006, pp. 2408--2412.

\bibitem{DBLP:journals/tit/KostinaV13}
V.~Kostina and S.~Verd{\'u}, ``Lossy joint source-channel coding in the finite
  blocklength regime,'' \emph{IEEE Transactions on Information Theory},
  vol.~59, no.~5, pp. 2545--2575, 2013.

\bibitem{tian2010optimality}
C.~Tian, J.~Chen, S.~N. Diggavi, and S.~Shamai, ``Optimality and approximate
  optimality of source-channel separation in networks,'' in \emph{Information
  Theory Proceedings (ISIT), 2010 IEEE International Symposium on}.\hskip 1em
  plus 0.5em minus 0.4em\relax IEEE, 2010, pp. 495--499.

\bibitem{lomnitz2011communication}
Y.~Lomnitz and M.~Feder, ``Communication over individual channels,''
  \emph{Information Theory, IEEE Transactions on}, vol.~57, no.~11, pp.
  7333--7358, 2011.

\end{thebibliography}

\end{document}